\newenvironment{proof}{\paragraph*{Proof:}}{\hfill$\square$}
\def\@normalsize{\@setsize\normalsize{13pt}\xipt\@xipt
	\abovedisplayskip 11pt plus3pt minus6pt
	\belowdisplayskip \abovedisplayskip
	\abovedisplayshortskip \z@ plus3pt
	\belowdisplayshortskip 6.6pt plus3.5pt minus3pt} 
\def\small{\@setsize\small{12pt}\xipt\@xipt
	\abovedisplayskip 10pt plus2pt minus5pt
	\belowdisplayskip \abovedisplayskip
	\abovedisplayshortskip \z@ plus3pt
	\belowdisplayshortskip 6pt plus3pt minus3pt
	\def\@listi{\topsep 6pt plus 2pt minus 2pt
		\parsep 3pt plus 2pt minus 1pt
		\itemsep \parsep}}
\def\footnotesize{\@setsize\footnotesize{10pt}\ixpt\@ixpt
	\abovedisplayskip 8pt plus 2pt minus 4pt
	\belowdisplayskip \abovedisplayskip
	\abovedisplayshortskip \z@ plus 1pt
	\belowdisplayshortskip 4pt plus 2pt minus 2pt
	\def\@listi{\topsep 4pt plus 2pt minus 2pt
		\parsep 2pt plus 1pt minus 1pt
		\itemsep \parsep}}
\def\scriptsize{\@setsize\scriptsize{9.5pt}\viiipt\@viiipt}
\def\tiny{\@setsize\tiny{7pt}\vipt\@vipt}
\def\large{\@setsize\large{14pt}\xiipt\@xiipt}
\def\Large{\@setsize\Large{18pt}\xivpt\@xivpt}
\def\LARGE{\@setsize\LARGE{22pt}\xviipt\@xviipt}
\def\huge{\@setsize\huge{25pt}\xxpt\@xxpt}
\def\Huge{\@setsize\Huge{30pt}\xxvpt\@xxvpt}
\def\section{\@startsection {section}{1}{\z@}%
	{-1.5\baselineskip plus-1pt minus-3pt}{1\baselineskip plus1pt minus2pt}%
	{\centering\normalsize\bf}}
\def\subsection{\@startsection{subsection}{2}{\z@}%
	{-1\baselineskip plus-1pt minus-2pt}{1\baselineskip plus1pt minus2pt}%
	{\normalsize\sc\noindent}}
\def\subsubsection{\@startsection{subsubsection}{3}{\z@}%
	{-1\baselineskip plus-1pt minus-2pt}{1sp}{\normalsize\it\noindent}}
\def\paragraph{\@startsection{paragraph}{4}{\z@}%
	{1\baselineskip plus1pt minus2pt}{-1em}{\normalsize\it\noindent}}
\let\subparagraph=\paragraph
\def\tableofcontents{\@restonecolfalse\if@twocolumn\@restonecoltrue
	\onecolumn\fi\OSIDcont\@starttoc{con}\if@restonecol\twocolumn\fi}
\def\l@section{\@dottedtocline{1}{0em}{.66em}}
\def\thebibliography#1{\section*{{Bibliography}\@mkboth
		{BIBLIOGRAPHY}{BIBLIOGRAPHY}}\footnotesize\rm\list
	{[\arabic{enumi}]}{\settowidth\labelwidth{[#1]}\leftmargin\labelwidth
		\advance\leftmargin\labelsep\usecounter{enumi}}
	\def\newblock{\hskip .11em plus .33em minus -.07em}
	\sloppy\clubpenalty4000\widowpenalty4000
	\sfcode`\.=1000\relax}
\def\ps@myheadings{\let\@mkboth\@gobbletwo
	\def\@oddhead{\hfil{\footnotesize\rm\rightmark}\hfil}
	\def\@evenhead{\hfil{\footnotesize\rm\leftmark}\hfil}
	\def\@oddfoot{\hfil{\footnotesize\sf\artid-\thepage}\hfil}
	\def\@evenfoot{\hfil{\footnotesize\sf\artid-\thepage}\hfil}
	\def\sectionmark##1{}\def\subsectionmark##1{}}
\newcounter{paPer}     %
\def\ps@osiD{\let\@mkboth\@gobbletwo
	\def\@oddfoot{\hfil{\footnotesize\sf\artid-\thepage}\hfil}
	\def\@evenhead{}\let\@evenfoot\@oddfoot}
\def\cite{\@ifnextchar [{\@tempswatrue\@Rcitex}{\@tempswafalse\@Rcitex[]}}
\def\@Rcitex[#1]#2{\if@filesw\immediate\write\@auxout{\string\citation{#2}}\fi
	\def\@citea{}\@cite{\@for\@citeb:=#2\do
		{\@citea\def\@citea{,\penalty\@m\,}\@ifundefined
			{b@\the\value{paPer}R\@citeb}{{\bf ?}\@warning
				{Citation `\@citeb' on page \thepage \space undefined}}%
			\hbox{\csname b@\the\value{paPer}R\@citeb\endcsname}}}{#1}}
\long\def\@caption#1[#2]#3{\par\addcontentsline{\csname
		ext@#1\endcsname}{#1}{\protect\numberline{\csname
			the#1\endcsname}{\ignorespaces #2}}\begingroup
	\@parboxrestore
	\small                                        
	\@makecaption{\csname fnum@#1\endcsname}{\ignorespaces #3}\par
	\endgroup}
\let\Rlabel=\label
\let\Rbibitem=\bibitem
\let\Rref=\ref
\let\Rpageref=\pageref
\def\label#1{\expandafter\Rlabel{\the\value{paPer}R#1}}
\def\bibitem#1{\expandafter\Rbibitem{\the\value{paPer}R#1}}
\def\ref#1{\expandafter\Rref{\the\value{paPer}R#1}}
\def\pageref#1{\expandafter\Rpageref{\the\value{paPer}R#1}}
\def\thesection{\arabic{section}.}
\def\YYMm{\rule{0ex}{4em}}
\newtoks\TITsi
\newtoks\TITsii
\def\title#1{\def\TITs{\LARGE{\raggedright\noindent\YYMm #1%
			\vskip8pt\par}}}
\def\author#1{\autMM{#1}\def\LHD{#1}}
\def\and{{\rm\lowercase{and}}}
\def\autMM#1{\TITsii={\vskip10pt\par\normalsize\rm\noindent #1\par}%
\TITsi=\expandafter{\TITs}\edef\TITs{\the\TITsi\the\TITsii}}
\def\address#1{\TITsii={\vskip6pt\par\footnotesize\sl\noindent #1\par}%
\TITsi=\expandafter{\TITs}%
\edef\TITs{\the\TITsi\the\TITsii}}
\def\received#1{\TITsii={\vskip10pt\par\small\rm\noindent(Received: #1)\par}%
\TITsi=\expandafter{\TITs}\edef\TITs{\the\TITsi\the\TITsii}}
\def\headtitle#1{\def\RHD{#1}}
\def\headauthor#1{\def\LHD{#1}}
\def\abst{{\bf Abstract.}}
\def\abstract#1{\TITs
\vskip15pt\par\noindent
{\footnotesize{\abst~} #1\vskip3pt\par}
\markright{\RHD}
\markboth{\LHD}{\RHD}}
\def\startpaper{%
\cleardoublepage
\setcounter{section}{0}
\stepcounter{paPer}
\setcounter{equation}{0}
\setcounter{footnote}{0}
\setcounter{figure}{0}
\setcounter{table}{0}
\def\theequation{\arabic{equation}}
\def\thefootnote{\arabic{footnote}}
\setcounter{defn}{0}
\setcounter{thm}{0}
\setcounter{lem}{0}
\setcounter{prop}{0}
\setcounter{rem}{0}
\thispagestyle{osiD}}
\def\OSIDcont{\cleardoublepage\thispagestyle{empty}
\markright{}\markboth{}{}
\normalsize\rm
\hspace*{\fill}{\large\rm
Contents of the Volume \Volume, Number \Number}\hspace*{\fill}
\par\vspace{1.5em}
\par\noindent}
\def\endpaper{\expandafter\label{\the\value{paPer}OpSy}}
\def\1{{\mathchoice{\rm 1\mskip-4mu l}{\rm 1\mskip-4mu l}%
{\rm 1\mskip-4.5mu l}{\rm 1\mskip-5mu l}}}
\def\varkappa{\mbox{\bBB\char 123}}
\def\longhookrightarrow{\lhook\joinrel\relbar\joinrel\rightarrow}
\def\longhookUp{\lower6pt\hbox{\rotatebox{90}{$\longhookrightarrow$}}}
\newtheorem{thm}{\rm THEOREM}
\newtheorem{defn}{\rm DEFINITION}
\newtheorem{exmp}{\rm EXAMPLE}
\newtheorem{rem}{\it Remark}
\def\theequation{\thesection\arabic{equation}}
\def\Myskip{\setlength{\baselineskip}{13pt}}
\def\text#1{\quad\mbox{\rm  #1 }\quad}
\def\DOInumber{}
\begin{document}

\def\artid{0000001}
\def\Volume{15}
\def\Number{1}
\def\Year{2008}
\setcounter{page}{1}

\def\DOInumber{}

\startpaper

\newcommand{\Mn}{M_n(\mathbb{C})}
\newcommand{\Mk}{M_k(\mathbb{C})}
\newcommand{\id}{\mbox{id}}
\newcommand{\ot}{{\,\otimes\,}}
\newcommand{{\Cd}}{{\mathbb{C}^d}}
\newcommand{\sbsigma}{{\mbox{\scriptsize \boldmath $\sigma$}}}
\newcommand{\sbalpha}{{\mbox{\scriptsize \boldmath $\alpha$}}}
\newcommand{\sbbeta}{{\mbox{\scriptsize \boldmath $\beta$}}}
\newcommand{\bsigma}{{\mbox{\boldmath $\sigma$}}}
\newcommand{\balpha}{{\mbox{\boldmath $\alpha$}}}
\newcommand{\bbeta}{{\mbox{\boldmath $\beta$}}}
\newcommand{\bmu}{{\mbox{\boldmath $\mu$}}}
\newcommand{\bnu}{{\mbox{\boldmath $\nu$}}}
\newcommand{\ba}{{\mbox{\boldmath $a$}}}
\newcommand{\bb}{{\mbox{\boldmath $b$}}}
\newcommand{\sba}{{\mbox{\scriptsize \boldmath $a$}}}
\newcommand{\MD}{\mathfrak{D}}
\newcommand{\sbb}{{\mbox{\scriptsize \boldmath $b$}}}
\newcommand{\sbmu}{{\mbox{\scriptsize \boldmath $\mu$}}}
\newcommand{\sbnu}{{\mbox{\scriptsize \boldmath $\nu$}}}
\def\oper{{\mathchoice{\rm 1\mskip-4mu l}{\rm 1\mskip-4mu l}%
		{\rm 1\mskip-4.5mu l}{\rm 1\mskip-5mu l}}}
\def\<{\langle}
\def\>{\rangle}
\def\theequation{\thesection\arabic{equation}}

\thispagestyle{plain}

\title{Systematic construction of ROCN Bell-inequalities}
\author{Arturo Konderak$^{1,*}$\orcidlink{0000-0002-4546-2626}, Patryk Michalski$^{1,2,\dagger}$\orcidlink{0009-0009-0305-7356}}
\address{$^1$Center for Theoretical Physics, PAS, al. Lotnik\'{o}w 32/46, 02-668 Warsaw, Poland} 
\address{$^2$Institute of Theoretical Physics, University of Warsaw, Pasteura 5, 02-093 Warsaw, Poland}
\address{$^*$email: \href{mailto:akonderak@cft.edu.pl}{\texttt{akonderak@cft.edu.pl}}}
\address{$^\dagger$email: \href{mailto:pmichalski@cft.edu.pl}{\texttt{pmichalski@cft.edu.pl}}}
\headauthor{A. Konderak, P. Michalski}
\headtitle{Systematic construction of ROCN Bell-inequalities}
\received{\today}

\abstract{Self-testing constitutes one of the most powerful forms of device certification, enabling a complete and device-independent characterization of a quantum apparatus solely from the observed correlations. In recent work by the authors~\cite{Michalski2025}, a general framework was introduced for constructing Bell inequalities that self-test entire families of Clifford generators. In this manuscript, we develop an alternative and complementary self-testing criterion based on symmetric spanning sets. This formulation provides an explicit and constructive route to designing self-testing Bell inequalities in arbitrary dimensions.}

\Myskip



\section{Introduction}

Bell non-locality is one of the most distinguishing features of quantum theory. It provides a rigorous and experimentally testable certification of the quantum nature of correlations between spatially separated systems. {Consider, for example,} two distant parties, {Alice} and {Bob}, who share a quantum resource---typically an entangled state. When each party performs local measurements on their respective subsystems, the resulting joint statistics can, in certain scenarios, exhibit correlations stronger than those permitted by any classical model based on local hidden variables~\cite{Augusiak2014}. In such cases, the correlations are said to be \emph{non-local}~\cite{BELL2001}.

The importance of non-locality is difficult to overstate, as it underpins a broad range of quantum certification protocols, including device-independent (DI) randomness certification~\cite{PhysRevA.93.040102}, quantum cryptography~\cite{Scarani2012}, quantum communication complexity~\cite{Buhrman2010}, and the self-testing of quantum devices~\cite{Supic2020}. Experimental investigations of Bell non-locality date back to the 1970s~\cite{Freedman72}, but it was not until 2015 that the first loophole-free demonstrations were achieved~\cite{Hensen2015}.

One of the most fundamental tools for certifying non-locality is the violation of a Bell inequality, which any local model must satisfy. Beyond that, Bell inequalities can also be used to characterize, for instance, full non-locality~\cite{Barrett2006,Salavrakos2017}, enable device-independent quantum cryptography~\cite{Acin2007,Ekert1992}, certify the dimensionality of a quantum channel~\cite{Engineer2025}, or demonstrate the infiniteness of the number of measurements~\cite{Tendick2025}. Viewing the Hamiltonian of a multipartite system as a Bell operator, one can prove that the ground state exhibits non-local properties~\cite{Tura2017}Remarkably, there exist cases in which the maximal violation of a Bell inequality also certifies the underlying experimental apparatus---that is, both the shared quantum state and the measurements performed by the parties~\cite{Supic2020}. In such {situations}, the Bell inequality is said to \emph{self-test} a reference quantum strategy.

The seminal CHSH inequality~\cite{Clauser1969} provides the simplest, and still one of the most powerful, examples of a Bell inequality. In this scenario, each party performs only two dichotomic measurements, and the maximal violation of the CHSH inequality self-tests the maximally entangled qubit state. Several generalizations of CHSH have been proposed for higher-dimensional systems. Notably, the Elegant Bell inequality~\cite{Gisin2009} achieves its maximal violation by employing measurements derived from SIC-POVMs, while other extensions include the Hadamard Bell inequalities~\cite{Goyeneche2023}. Additionally, a general class of Bell inequalities based on Platonic solids was introduced in Ref.~\cite{Pal2022}. More recently, the ROCN class of Bell inequalities {was proposed in Ref.}~\cite{Michalski2025}, for which the quantum bound can be derived analytically. Under suitable structural conditions, the maximal violation of ROCN inequalities allows for the self-testing of maximally entangled states in higher dimensions.

In this article, we further investigate the structure of ROCN Bell inequalities. In particular, Ref.~\cite{Michalski2025} provides a necessary and sufficient condition for a Bell inequality to self-test the reference strategy. However, this condition can appear abstract from a physical perspective. We therefore begin by providing a more intuitive, physically motivated interpretation of this criterion. Moreover, by analyzing the structure of symmetric Hilbert spaces, we derive a sufficient condition for an ROCN Bell inequality to be self-testing: namely, the columns of the correlation matrix should form a \emph{symmetric spanning set}. This result enables a systematic construction of families of ROCN Bell inequalities that self-test any even number of Majorana observables. Such constructions are particularly relevant given the central role of Majorana observables and Clifford groups across physics, from quantum thermodynamics~\cite{Araki2003} and quantum field theory~\cite{Derezinski2023} to quantum computing~\cite{Nielsen2012,Romanova2025}.

The paper is organized as follows. In Section~\ref{sec:preliminaries}, we provide an introduction to Bell non-locality, with a focus on ROCN Bell inequalities (Subsection~\ref{sec:rocn_Bell_inequalities}) and self-testing (Subsection~\ref{sec:rocb_self_testing}). Section~\ref{sec:symmetric_subspace} recalls the definition of the symmetric two-fold tensor product and introduces the notion of a symmetric spanning set. In Theorem~\ref{th:set_spanning}, we present an explicit family of vectors whose two-fold tensor products span the full symmetric subspace. Section~\ref{sec:sufficient_condition} provides a {physically transparent criterion for} self-testing within the ROCN framework. Combining this result with Theorem~\ref{th:set_spanning}, Section~\ref{sec:self-testing_construction} describes a construction of ROCN Bell inequalities that self-test any even number of Clifford generators. Finally, in Section~\ref{sec:examples}, we illustrate the procedure with explicit examples involving two and four Clifford generators.

\section{Preliminaries}\label{sec:preliminaries}

In this section, we present the main concepts related to Bell non-locality. In Subsection~\ref{sec:rocn_Bell_inequalities}, we introduce the notion of Bell inequalities, and in Subsection~\ref{sec:rocb_self_testing}, we discuss the concept of self-testing. Throughout, the discussion is specialized to ROCN Bell inequalities.

\subsection{ROCN Bell inequalities}\label{sec:rocn_Bell_inequalities}

We consider a bipartite scenario in which two parties, Alice and Bob, have access to local Hilbert spaces $\mathcal H_{\mathrm{A}}$ and $\mathcal H_{\mathrm{B}}$, respectively, and share a quantum state
\begin{equation}
	\ket{\Psi_{\mathrm{AB}}}\in\mathcal H_{\mathrm{A}}\otimes \mathcal H_{\mathrm{B}}.
\end{equation}
Each party performs local measurements. We focus on the case where all measurements have two outcomes, labeled by $\pm 1$. In this setting, Alice’s and Bob’s measurements can be represented by self-adjoint observables $A_{i}$ for $i=1,\dots, m$ and $B_j$ for $j=1,\dots, n$, satisfying $A_i^2=B_j^2=\mathbb I$.

The quantity $p(a,b|i,j)$, denoting the probability of obtaining outcomes $a,b=\pm 1$ upon measuring the observables $A_i$ and $B_j$, is referred to as a quantum correlation. It can be expressed through an affine transformation in terms of expectation values~\cite{Goh2018}:
\begin{equation}
	p(a,b|i,j)=\frac{1}{4}\left[1+(-1)^{\frac{1-a}{2}}\langle A_i \rangle + (-1)^{\frac{1-b}{2}} \langle B_j \rangle - (-1)^{\frac{a+b}{2}} \langle A_i B_j\rangle \right].
\end{equation}
Thus, quantum correlations may equivalently be studied in the space of probability distributions or in terms of expectation values. 

A Bell inequality provides a criterion to determine whether a given correlation exhibits nonlocality. Such inequalities are formulated in terms of a Bell operator, which, for correlation-type Bell inequalities, takes the form
\begin{equation}
	\mathcal B_h:=\sum_{i=1}^m\sum_{j=1}^n h_{ij}A_i \otimes B_j.
\end{equation}
Note that for correlation-type Bell operators, no linear terms in the local observables $\langle A_i\rangle$ or $\langle B_j\rangle$ appear. The coefficients $h_{ij}$ form a real $m\times n$ matrix, and the associated Bell inequality is defined as
\begin{equation}
	I_h:= \langle \mathcal B_h\rangle= \sum_{i=1}^m\sum_{j=1}^n h_{ij}\langle A_iB_j\rangle\leqslant \beta_C^h,
\end{equation}
where $\beta_C^h$ denotes the \textit{classical} or \textit{local bound}. This bound corresponds to the maximal value of $I_h$ over all correlations that factorize as $\langle A_i B_j\rangle = \langle A_i\rangle\langle B_j\rangle$ with $\langle A_i\rangle, \langle B_j\rangle \in \{-1,+1\}$. We refer to $I_h$ as the \textit{Bell functional}. For a given quantum realization, if $I_h > \beta_C$, the corresponding correlations are said to be nonlocal. The \textit{quantum bound}, denoted by $\beta_Q^h$, is the maximal attainable value of the Bell functional $I_h$ over all possible quantum realizations. 

The family of ROCN Bell inequalities is defined by requiring that the coefficient matrix $h$ satisfy the ROCN properties introduced in Ref.~\cite{Michalski2025}. For completeness, we recall the definition below.

\begin{defn}[ROCN matrix]
	Let $h$ be a real $m\times n$ matrix with $m\leqslant n$. We say that $h$ is a \emph{row-orthogonal and column-normalized (ROCN) matrix} if it satisfies the following conditions:
	\begin{align}
		&\forall\; i, k \in \{1,\ldots,m\}: \quad \sum_{j = 1}^n h_{ij} h_{kj} = \delta_{ik} \sum_{j = 1}^n \label{eq:hmatrix_1}h_{ij}^2 \quad \text{(orthogonality of rows)}, \\ \label{eq:hmatrix_2}
		&\forall\; j \in \{1,\ldots,n\}: \quad \quad \sum_{i = 1}^m h_{ij}^2 = 1 \quad \text{(normalization of columns)}.
	\end{align}
	We additionally assume that each row contains at least one non-zero entry.
\end{defn}

Examples of ROCN matrices include orthogonal matrices (for $m=n$) or semiorthogonal matrices. Notably, both the CHSH and Elegant Bell inequalities (EBI) can be expressed in ROCN form. The quantum bound $\beta_Q^h$ associated with a ROCN Bell inequality can be computed explicitly:
\begin{equation}
	\beta_Q^h = \sup_{\ket{\psi},\{A_i\},\{B_j\}}\mel{\psi}{\mathcal B_h}{\psi}=n.
\end{equation}
This bound follows from a sum-of-squares (SOS) decomposition; see Ref.~\cite{Michalski2025} for details. 

The maximal quantum value is attained by the following quantum strategy. Let $r=\lfloor m/2 \rfloor$, and consider local Hilbert spaces
\begin{equation}
	\mathcal H_{\mathrm A}=\mathcal H_{\mathrm B}=\mathbb C^d,
\end{equation}
where $d=2^r$. Denote by
\begin{equation}\label{eq:canonical_state}
	\ket{ \Phi_d}=\frac{1}{\sqrt d}\sum_{i=1}^d\ket{i}_{\mathrm A}\otimes \ket{i}_{\mathrm B}
\end{equation}
the maximally entangled state between Alice and Bob. Let Alice’s observables $\{\widetilde A_i\}_i$ be a family of pairwise anticommuting operators satisfying
\begin{equation}\label{eq:canonical_operator_alice}
	\{\widetilde A_i,\widetilde A_k\}=2\delta_{ik}\mathbb I, \qquad i,k=1,\dots,m,
\end{equation}
and define Bob’s observables $\{\widetilde B_j\}_j$ by
\begin{equation}\label{eq:canonical_operator_bob}
	\widetilde B_j=\sum_{i=1}^n h_{ij}\widetilde A_i^{T},\qquad j=1,\dots,n,
\end{equation}
where $\widetilde A_i^{T}$ denotes the transpose of $\widetilde A_i$. We can now introduce the following definition.
\begin{defn}\label{def:canonical_strategy}
	Let $h$ be an $m\times n$ ROCN matrix. The \emph{canonical} or \emph{reference strategy} for achieving the quantum bound $n$ of the ROCN Bell functional $I_h$ consists of the maximally entangled state~\eqref{eq:canonical_state} together with the observables defined in Eqs.~\eqref{eq:canonical_operator_alice} and~\eqref{eq:canonical_operator_bob}.
\end{defn}
It is worth noting that the anticommuting observables~\eqref{eq:canonical_operator_alice} can be explicitly represented using the Jordan–Wigner representation~\cite{Jordan1928} on the space $\bigotimes_{k=1}^r \mathbb C^2$; see also Ref.~\cite{Michalski2025}.

\subsection{Self-testing with ROCN Bell inequalities}\label{sec:rocb_self_testing} 

In certain scenarios, a set of nonlocal quantum correlations $\{p(a,b|i,j)\}$ uniquely determines the quantum state and measurement operators used by Alice and Bob. This property is known as self-testing. The notion of uniqueness is understood only up to local transformations that leave the observed correlations invariant. We formalize this in the following definition.
\begin{defn}\label{def:self_testing}
	Let a set quantum correlations $\{p(a,b|i,j)\}$ be generated by a bipartite quantum state
	\begin{equation}
		\ket*{\widetilde \psi}\in\mathbb C^d\otimes \mathbb C^d
	\end{equation}
	and by local measurement operators
	\begin{equation}
		\{\widetilde A_i\}_{i=1,\dots, m}\quad \{\widetilde B_j\}_{j=1,\dots, n}.
	\end{equation}
	We say that $\{p(a,b|i,j)\}$ self-tests the quantum state $\ket*{\widetilde \psi}$ and the measurement operators $\{\widetilde A_i\}_i$ and $\{\widetilde B_j\}_j$ if, for any state $\ket{\psi}\in\mathcal H_{\mathrm{A}}\otimes \mathcal H_{\mathrm{B}}$ and observables $\{A_i'\}_i$ and $\{B_j'\}_j$ that reproduce the same correlations $\{p(a,b|i,j)\}$, there exist local isometries
	\begin{equation}
		U_{\mathrm{A}}: \mathcal H_{\mathrm{A}} \rightarrow \mathbb C^d \otimes \mathcal H_{\mathrm{A}}',\quad U_{\mathrm{B}}: \mathcal H_{\mathrm{B}} \rightarrow \mathbb C^d \otimes \mathcal H_{\mathrm{B}}'
	\end{equation}
	such that 
	\begin{equation}
		U_{\mathrm A}\otimes U_{\mathrm B}\ket{\psi'}=\ket{\psi}\otimes \ket*{\varphi_{\mathrm{junk}}}
	\end{equation}
	and 
	\begin{equation}
		U_{\mathrm A}A_iU_{\mathrm A}^\dagger = \widetilde  A_i \otimes\mathbb I_{\mathrm A'},\quad U_{\mathrm B}B_jU_{\mathrm B}^\dagger =\widetilde B_j \otimes\mathbb I_{\mathrm B'}.
	\end{equation}
	Here, $\mathcal H_{\mathrm A}'$ and $\mathcal H_{\mathrm B}'$ are auxiliary Hilbert spaces, and $\ket*{\varphi_{\mathrm{junk}}}$ denotes a normalized state supported on $\mathcal H_{\mathrm A}'\otimes\mathcal H_{\mathrm B}'$.
\end{defn}
Suppose that the maximal quantum value $\beta_Q^h$ of a Bell functional $I_h$ can be achieved only by a specific set of quantum correlations $\{p(a,b|i,j)\}$, and that this correlation self-tests a particular quantum strategy. In that case, we say that the Bell inequality itself self-tests the corresponding quantum strategy. 

As an example, consider the CHSH Bell functional
\begin{equation}\label{eq:chsh}
	I_{\mathrm{CHSH}}:=\langle A_1B_1\rangle+\langle A_1B_2\rangle+\langle A_2B_1\rangle-\langle A_2B_2\rangle,
\end{equation}
whose maximal quantum value is $2\sqrt 2$. This value uniquely identifies (up to local isometries) the maximally entangled two-qubit state
\begin{equation}
	\ket{\Phi_2}=\frac{1}{\sqrt 2} \left(\ket{1}\otimes \ket{1}+ \ket{2}\otimes \ket{2}\right)
\end{equation}
together with the local measurements
\begin{align}
	A_1 =\sigma_z&,\quad A_2=\sigma_x,\\
	B_1 = \frac{1}{\sqrt 2} (\sigma_z+\sigma_x)&,\quad B_2 = \frac{1}{\sqrt 2} (\sigma_z-\sigma_x).
\end{align}
Hence, the CHSH inequality self-tests both the maximally entangled bipartite qubit state and the corresponding measurement operators realizing the maximal quantum violation.

For ROCN Bell inequalities, a necessary and sufficient condition for self-testing is provided by the following Theorem~\cite{Michalski2025}.

\begin{thm}[Self-testing from ROCN Bell inequalities]\label{th:self_testing_rocn}
	Let $h$ be a $m \times n$ ROCN matrix, with $m$ even. Define the matrix
	\begin{equation}\label{eq:M_matrix}
		M_{j,(i,k)}=h_{ji}h_{jk},\quad  j=1,\dots,m,\quad  1\leqslant i < k\leqslant n,
	\end{equation}
	where each pair $(i,k)$ is treated as a single, unique column index. Then, the maximal violation of the Bell inequality $I_h$ self-tests the canonical strategy introduced in Definition~\ref{def:canonical_strategy} if and only if $M$ has full column rank:
	\begin{equation}\label{eq:necessary_self_testing}
		\mathrm{rank}(M) = \frac{1}{2}{n(n-1)}.
	\end{equation}
\end{thm}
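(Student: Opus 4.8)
The plan is to reduce the self-testing problem to a rigidity statement about the operator identities forced by the sum-of-squares (SOS) decomposition, and then to show that those identities pin down the anticommutation relations exactly when $M$ has full column rank. First I would recall that, since $I_h$ is built from an ROCN matrix, the SOS decomposition (from Ref.~\cite{Michalski2025}) attains the value $n$ iff, on the support of any optimal state $\ket{\psi}$, one has the operator equalities $B_j = \sum_i h_{ij} A_i^{T}$ (suitably transported via the Schmidt/regularization isometry to Bob's side), together with $A_i^2 = \mathbb I$. The only remaining data needed to identify the canonical strategy is the full set of anticommutators $\{A_i, A_k\}$ for $i<k$; so the theorem amounts to: the observed correlations force $\{A_i,A_k\}\ket{\psi}=0$ for all $i<k$ \emph{iff} $\mathrm{rank}(M)=\tfrac12 n(n-1)$.

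Next I would extract the linear system. Plugging $B_j = \sum_k h_{kj} A_k^{T}$ into the constraint $B_j^2 = \mathbb I$ and using $A_k^2=\mathbb I$ and column normalization $\sum_k h_{kj}^2=1$, one gets $\sum_{i<k} h_{ij}h_{kj}\,\{A_i^{T},A_k^{T}\} = 0$ for every $j$. Writing $C_{(i,k)} := \{A_i,A_k\}$ as unknowns (self-adjoint operators, or rather the vectors $C_{(i,k)}\ket{\psi}$, equivalently $C_{(i,k)}\ket{\psi}\otimes\ket{\cdot}$ after the standard swap), this is precisely the homogeneous system $M \vec C = 0$ with $M_{j,(i,k)} = h_{ij}h_{kj}$ as in~\eqref{eq:M_matrix}. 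If $M$ has full column rank $\tfrac12 n(n-1)$, the only solution is $C_{(i,k)}\ket{\psi}=0$ for all pairs, which is the anticommutation relation on the support; standard Jordan-lemma / representation-theory arguments for Clifford algebras (again following Ref.~\cite{Michalski2025}) then upgrade this to the existence of the local isometries of Definition~\ref{def:self_testing}, giving the self-testing conclusion. Conversely, if $\mathrm{rank}(M) < \tfrac12 n(n-1)$, I would exhibit a nonzero solution $\vec C$ and use it to build an explicit alternative optimal strategy with the same correlations but inequivalent anticommutation structure — e.g. by deforming the canonical observables within the kernel direction, or by taking a direct sum of the canonical representation with a second representation of a different Clifford algebra whose anticommutators realize $\vec C$ — thereby violating self-testing.

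The main obstacle I anticipate is the converse direction: turning a kernel vector of $M$ into a \emph{bona fide} pair of observables $\{A_i'\}, \{B_j'\}$ (with $A_i'^2=\mathbb I$) and a state reproducing \emph{all} the correlations $p(a,b|i,j)$, not merely the Bell value. One must be careful that perturbing the anticommutators does not change $\langle A_i' B_j'\rangle = \sum_k h_{kj}\langle A_i' A_k'^{T}\rangle$ — here it helps that correlation-type functionals involve no marginals, and that the diagonal entries $\langle A_i' A_i'\rangle = 1$ are fixed, so only the off-diagonal symmetric part (the $C_{(i,k)}$) is free; the construction should keep $\langle\psi| C_{(i,k)}|\psi\rangle$ unchanged (typically $0$) while making the operator $C_{(i,k)}$ itself nonzero. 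A clean way to do this is the direct-sum construction, where the "extra" block is chosen traceless on its marginal so that no observable statistics change. The forward direction is comparatively routine once the SOS rigidity is in hand. I would also need to confirm that the evenness of $m$ enters only through the existence and structure of the canonical Clifford representation on $\mathbb C^{2^{m/2}}$, which is already supplied by the preliminaries.
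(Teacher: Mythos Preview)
The paper does not prove this theorem: it is stated in Section~\ref{sec:rocb_self_testing} as a result imported from Ref.~\cite{Michalski2025}, with no proof given in the present manuscript. There is therefore no proof here to compare your proposal against.

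For what it is worth, your outline matches the way the paper \emph{uses} the result. In the proof of Theorem~\ref{th:self_testing_column_vectors} the full-column-rank condition is unpacked exactly as the injectivity statement
\[
\forall\, j=1,\dots,n:\ \sum_{i<k}^{m} h_{ij}h_{kj}\,O_{ik}=0 \ \Longleftrightarrow\ O=0,
\]
which is precisely the homogeneous system $M\vec C=0$ you derive from $B_j^2=\mathbb I$ with $C_{(i,k)}=\{A_i,A_k\}$. (Note that the index labels in the theorem statement are transposed relative to their actual use: the row index of $M$ runs over $j=1,\dots,n$ and the column multi-index $(i,k)$ over $1\le i<k\le m$, with full rank $\tfrac12 m(m-1)$; this is how Remark~\ref{eq:necessary_self_testing_2} and Theorem~\ref{th:self_testing_column_vectors} read it, and also how you interpreted it.) Your caution about the converse is appropriate: turning a kernel vector into a genuine inequivalent optimal strategy that reproduces all correlators, not just the Bell value, is the nontrivial half and requires an explicit construction---the direct-sum idea is the right instinct, but one must verify that the deformed anticommutators can be realized by unitary involutions and that the added block contributes nothing to any $\langle A_i B_j\rangle$.
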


\begin{rem}
	Observe that for the matrix to have full column rank, the number of rows must be greater than or equal to the number of columns of $M$. Furthermore, Eq.~\eqref{eq:hmatrix_1} implies that the rows of $M$ are linearly dependent. Therefore, in order to obtain the rank~\eqref{eq:necessary_self_testing}, the number of rows must strictly exceed the number of columns:
	\begin{equation}\label{eq:necessary_self_testing_2}
		n > \frac{1}{2}m(m-1).
	\end{equation}
	This inequality imposes a constraint on the number of observables required for each observer to certify the measurements.
\end{rem}

The case where $m$ is odd is more subtle. In particular, Alice's operators can be transformed as
\begin{equation}
	A_j\mapsto A_j \quad \text{for}\quad j=1,\dots, m-1,\quad A_m\mapsto -A_m,
\end{equation}
without altering the anticommutation relations among the operators. Consequently, two distinct and nonequivalent strategies can give rise to the same quantum correlations. This subtlety and its implications for self-testing are discussed in detail in Ref.~\cite{Michalski2025}.

\section{Symmetric subspaces}\label{sec:symmetric_subspace}

Symmetric subspaces arise naturally in the description of bosonic systems, reflecting the indistinguishability of identical particles~\cite{Sakurai1986}. They are fundamental in various branches of physics. For example, collective emission phenomena such as superradiance and subradiance can be formulated in terms of Dicke states, which form a distinguished basis of the symmetric subspace~\cite{Dicke1954,viggiano2023}. The entanglement properties of symmetric states have been extensively explored: several entanglement criteria become equivalent when restricted to symmetric states~\cite{Toth2009}, and for certain families of symmetric states, separability is fully characterized by the PPT criterion~\cite{quesada}. However, this equivalence does not hold for all symmetric states~\cite{Augusiak_2012,Tura2018,romero_palleja}. Furthermore, special classes of symmetric states, such as GHZ and Dicke states~\cite{Supic2018,Fadel2017}, are known to be self-testable, as are antisymmetric Slater states~\cite{Konderak2025}. See also Ref.~\cite{Marconi2025} for a review on symmetric quantum systems. We now revisit the main definitions of symmetric subspaces for bipartite systems and construct explicit examples of subspaces spanning the symmetric sector.

Let $\mathcal H_A=\mathbb C^m$, and define the symmetric projector, which acts on a simple tensor $\ket{u}\otimes \ket{v}$ as
\begin{equation}
	\mathcal S(\ket{u}\otimes\ket{v})=\frac{1}{2}\left(\ket{u}\otimes\ket{v}+\ket{v}\otimes \ket{u}\right).
\end{equation}
The operator $\mathcal S$ extends linearly to all elements of the tensor product space $\mathbb C^m\otimes \mathbb C^m$. Its range is the symmetric subspace $\mathcal S(\mathbb C^m\otimes \mathbb C^m)$. For two normalized vectors $\ket{v}$ and $\ket{w}$, the symmetric tensor product $\ket{v}\vee\ket{w}$ is defined as
\begin{equation}
	\ket{v}\vee \ket{w}=\frac{1}{\norm{\mathcal S(\ket{u}\otimes \ket{v})}}\mathcal S(\ket{u}\otimes \ket{v}).
\end{equation}
Let $\{\ket{e_i}\}_{i=1}^m$ be an orthonormal basis of $\mathbb C^m$. Then, the following family of vectors span all the symmetric subspace:
\begin{equation}
	\{{\ket{e_i} \vee \ket{e_k}}:1\leqslant i\leqslant k\leqslant m\}.
\end{equation}
These vectors are linearly independent, and hence the dimension of the symmetric subspace $\mathcal S(\mathbb C^m\otimes \mathbb C^m)$ is $m(m+1)/2$.
For any $i<k$, we observe that
\begin{equation}
	2\mathcal{S}(\ket{e_i} \otimes \ket{e_k}) = (\ket{e_i}+\ket{e_k})\otimes(\ket{e_i}+\ket{e_k})-\ket{e_i}{\otimes \ket{e_i}}-\ket{e_k}{\otimes \ket{e_k}}.
\end{equation}
Introducing the shorthand notation $\ket{u}^{\otimes 2}=\ket{u}\otimes \ket{u}$, we find that the set of vectors
\begin{equation}\label{eq:spanning_2_fold_product}
	\{\left(\ket{e_i}+\ket{e_k}\right)^{\otimes 2}:1\leqslant i\leqslant k\leqslant m\}
\end{equation}
also forms a basis of the symmetric subspace $\mathcal S(\mathbb C^m\otimes \mathbb C^m)$. We define a set of vectors $\{\ket{v_i}\}_i \subset \mathbb C^m$ to be a symmetric spanning set if the two-fold tensor products $\{\ket{v_i}^{\otimes 2}\}_i$ span the whole symmetric subspace $\mathcal S(\mathbb C^m\otimes \mathbb C^m)$~\cite{Procesi2007}. As we will see in the following section, symmetric spanning sets can be used to construct ROCN matrices that self-test the canonical strategy introduced in Definition~\ref{def:canonical_strategy}.

For this purpose, for each $i=1,\dots,m$, consider the family of vectors
\begin{equation}\label{eq:basis}
	\left\{\frac{\ket{e_i}+\ket*{e_{i\oplus k}}}{\sqrt{2}}:k=1,\dots,m-1\right\}\cup \{\ket{e_i}:i=1,\dots,m\},
\end{equation}
where $i\oplus k$ denotes the sum modulo $m$. Each vector in Eq.~\eqref{eq:basis} is normalized, though the vectors are not necessarily mutually orthogonal. We then define
\begin{equation}\label{eq:a_vectors}
	\left\{\ket{a_{ik}}:k=1,\dots,m\right\}
\end{equation}
as the orthonormal system obtained from~\eqref{eq:basis} via the Gram--Schmidt orthonormalization procedure. This construction allows us to formulate the following result.

\begin{thm}\label{th:set_spanning}
	Let $\{\ket{e_i}\}_{i=1}^m$ be an orthonormal basis in $\mathbb C^m$, and, for each $i=1,\dots,m$, let $\{\ket{a_{ik}}\}_{k=1}^m$ be defined as in Eq.~\eqref{eq:a_vectors}. Then, the set of vectors
	\begin{equation}\label{eq:set_spanning}
		\left\{\ket{e_i}: 1 \leqslant i \leqslant m \right\}
		\ \cup
		\left\{\ket{a_{ik}}: 1 \leqslant i \leqslant k \leqslant m\right\}
	\end{equation}
	forms a symmetric spanning set
\end{thm}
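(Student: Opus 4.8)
The plan is to dualize the spanning condition and then reduce it to a finite linear system whose structure is controlled by the Gram--Schmidt construction. By the basis property recorded after Eq.~\eqref{eq:spanning_2_fold_product}, a finite collection of vectors in $\mathbb C^m$ is a symmetric spanning set if and only if the only symmetric matrix $Q$ with $\ket{v}^{\mathrm T}Q\ket{v}=0$ for every vector $\ket{v}$ of the collection is $Q=0$; equivalently, the configuration lies on no nontrivial quadric. So I would fix such a $Q$, set $q(\ket{x})=\ket{x}^{\mathrm T}Q\ket{x}$, and aim to show $q\equiv 0$. The vectors $\ket{e_1},\dots,\ket{e_m}$ give $q(\ket{e_i})=Q_{ii}=0$, so $Q$ has zero diagonal and $q(\ket{x})=\sum_{p<r}2Q_{pr}x_px_r$ only involves the $\binom{m}{2}$ off-diagonal unknowns; since $\binom{m}{2}=\dim\mathcal S(\mathbb C^m\otimes\mathbb C^m)-m$, it is exactly these that the remaining $\binom{m+1}{2}$ equations $q(\ket{a_{ik}})=0$ must annihilate.

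Next I would make the $\ket{a_{ik}}$ explicit. The family in Eq.~\eqref{eq:basis} spans $\mathbb C^m$, so Gram--Schmidt yields an orthonormal basis $\{\ket{a_{i,k}}\}_{k=1}^m$ with $\ket{a_{i,1}}=(\ket{e_i}+\ket{e_{i\oplus 1}})/\sqrt2$; at step $k$ one merely orthogonalizes $(\ket{e_i}+\ket{e_{i\oplus k}})/\sqrt2$ against $\mathrm{span}\{\ket{e_i}+\ket{e_{i\oplus l}}:l<k\}$, which routinely gives, up to normalization,
\[
\ket{a_{i,k}}\ \propto\ \ket{e_i}-\sum_{l=1}^{k-1}\ket{e_{i\oplus l}}+k\,\ket{e_{i\oplus k}}\quad(1\le k\le m-1),\qquad \ket{a_{i,m}}\ \propto\ \ket{e_i}-\sum_{j\ne i}\ket{e_j}.
\]
Substituting into $q$ and using the telescoping identity $q(g_k)-q(g_{k-1})=2k\,(T^{(i)}_k-T^{(i)}_{k-1})$, where $g_k$ is the unnormalized $\ket{a_{i,k}}$ and $T^{(i)}_l:=Q_{i,\,i\oplus l}-\sum_{s=1}^{l-1}Q_{i\oplus s,\,i\oplus l}$, the family-$i$ equations collapse to $k\,T^{(i)}_k=\sum_{l<k}T^{(i)}_l$ for $1\le k\le m-1$ together with $\sum_{l=1}^{m-1}T^{(i)}_l=0$. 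Only the constraints with $k\ge i$ are actually present for family $i$; solving those available ones (the partial sums $P^{(i)}_l=\sum_{s\le l}T^{(i)}_s$ satisfy $P^{(i)}_l=\frac{l+1}{i}P^{(i)}_{i-1}$ for $l\ge i-1$) and then imposing $\sum_{l=1}^{m-1}T^{(i)}_l=0$ forces $P^{(i)}_{i-1}=0$, hence $T^{(i)}_l=0$ for $l=i,\dots,m-1$ and the single extra identity $\sum_{l=1}^{i-1}T^{(i)}_l=0$.

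Finally I would assemble these relations over $i=1,\dots,m$ and conclude $Q=0$. Families $1$ and $2$ give $T^{(i)}_l=0$ for every $l$, which recursively expresses each $Q_{1,t}$, and then each $Q_{2,t}$, through the $Q_{j,t}$ with larger first index, whereas the higher families contribute $T^{(i)}_l=0$ for $l\ge i$ together with the wrap-around identities $\sum_{l<i}T^{(i)}_l=0$; these should eliminate the remaining $Q_{j,t}$. I expect this closing step to be the main obstacle: because $i\oplus k$ is read modulo $m$, the support of $\ket{a_{i,k}}$ is a cyclic arc that eventually wraps past the small indices, so one has to track carefully which coefficient $Q_{j,t}$ each relation newly determines, and in what order (say by increasing $i$ and then $k$), and verify that the full system has rank exactly $\binom{m}{2}$---there is genuine slack, since the $\binom{m+1}{2}$ constraints outnumber the unknowns, and it must be handled rather than wished away. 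An induction on $m$, peeling off the index $1$ after exploiting the family-$1$ relations, looks like the natural frame for organizing this last step; once $Q=0$ has been reached, the duality from the first paragraph completes the proof.
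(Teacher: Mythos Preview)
Your dualization and the telescoping reduction are both correct: the family-$i$ constraints really do collapse to $kT^{(i)}_k=\sum_{l<k}T^{(i)}_l$, and together with $q(\ket{a_{i,m}})=0$ they do yield $P^{(i)}_{i-1}=0$, hence $T^{(i)}_l=0$ for $l\ge i$ plus the single residual identity $\sum_{l<i}T^{(i)}_l=0$. The genuine gap is the closing step, which you flag as the main obstacle and then do not carry out. For $i\ge 3$ you are left with only one scalar relation per family on the unknowns $T^{(i)}_1,\dots,T^{(i)}_{i-1}$, and you give no argument that assembling these cyclic wrap-around identities across all $i$ actually forces $Q=0$; ``induction on $m$, peeling off the index $1$'' is a plan, not a proof, and the cyclic action of $\oplus$ prevents any naive decoupling.

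The gap stems from reading the range $1\le i\le k\le m$ in Eq.~\eqref{eq:set_spanning} literally. The element count $m(m+1)$ stated just after the theorem, and the construction in Theorem~\ref{th:self_testing_construction} from \emph{all} blocks $O^{(1)},\dots,O^{(m)}$, show that the intended set contains every $\ket{a_{ik}}$ with $1\le i,k\le m$. With the full set, every family-$i$ constraint is present, your recursion gives $T^{(i)}_l=0$ for \emph{all} $i$ and $l$, and then a trivial induction on $l$ finishes: each $Q_{i\oplus s,\,i\oplus l}$ appearing in $T^{(i)}_l$ equals $Q_{j,\,j\oplus(l-s)}$ with $l-s<l$, so $T^{(i)}_l=0$ reduces to $Q_{i,\,i\oplus l}=0$. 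This is precisely the dual of the paper's argument. The paper works primally: it expands $\ket{a_{ik}}^{\otimes 2}$ and, by induction on $k$, isolates $\mathcal S(\ket{e_i}\otimes\ket{e_{i\oplus k}})$ as a combination of $\ket{a_{ik}}^{\otimes 2}$, the diagonal squares, and already-obtained symmetric tensors at smaller shifts. Both proofs exploit the same triangular feature---that $\ket{a_{ik}}$ is supported on the cyclic arc $\{i,i\oplus1,\dots,i\oplus k\}$---and both require the full collection of $\ket{a_{ik}}$'s, which is why your restricted version stalls.
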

\begin{proof}
	We first express each vector $\ket{a_{ik}}$ recursively in the basis $\{\ket{e_j}\}_j$:
	\begin{align}\label{eq:vectors_matrix}
		\ket{a_{i1}}&=\frac{\ket{e_i}+\ket*{e_{i\oplus 1}}}{\sqrt{2}},\nonumber\\
		\ket{a_{i2}}&={\frac{1}{\sqrt 6}}\ket{e_i}-{\frac{1}{\sqrt 6}}\ket*{e_{i\oplus 1}}+\sqrt{\frac{2}{3}}\ket{e_{i\oplus 2}},\nonumber\\
		&\vdots \nonumber\\
		\ket{a_{ik}}&= \alpha_{k}\left(\ket{e_i}-\sum_{j=1}^{k-1}\ket*{e_{i\oplus j}}\right)+\beta_{k}\ket{e_{i\oplus k}}.
	\end{align}
	This structure arises because each $\ket{a_{ik}}$ is orthogonal to $\ket{e_i}+\ket*{e_{i\oplus j}}$ for all $1\leqslant j<k$. Explicit calculation via the Gram--Schmidt procedure shows that
	\begin{equation}\label{alpha_beta}
		\alpha_k=\frac{1}{\sqrt{k+k^2}},\quad \beta_k=\sqrt{\frac{k}{1+k}}.
	\end{equation} 
	Indeed, for the vector $(\ket{e_i}+\ket*{e_{i\oplus k}})/\sqrt{2}$, the orthogonalization step reads
	\begin{equation}
		\frac{\ket{e_i}+\ket*{e_{i\oplus k}}}{\sqrt 2}-\sum_{j=1}^{k-1}\ketbra*{a_{ij}}\frac{\ket{e_i}+\ket*{e_{i\oplus k}}}{\sqrt 2}=	\frac{\ket{e_i}+\ket*{e_{i\oplus k}}}{\sqrt 2}-\frac{1}{\sqrt 2}\sum_{j=1}^{k-1}\alpha_j\ket*{a_{ij}}.
	\end{equation}
	Comparing this with the general form of $\ket{a_{ik}}$ yields the relation
	\begin{equation}
		\frac{\alpha_k}{\beta_k} = 1+\sum_{j=1}^{k-1}\alpha_j^2.
	\end{equation}
	Together with the normalization condition $k\alpha_k^2+\beta_k^2=1$, this gives Eq.~\eqref{alpha_beta} by induction.
	
	Next, we show that the two-fold tensor products of the vectors in Eq.~\eqref{eq:set_spanning} span the symmetric subspace. Write $\ket{a_{ik}}=\sum_{j=0}^k \gamma_{kj} \ket*{e_{i\oplus j}}$, so that
	\begin{align}
		\ket{a_{ik}}^{\otimes 2}&=\sum_{j,\ell=0}^k\gamma_{kj}\gamma_{k\ell} \ket*{e_{i\oplus j}}\otimes \ket*{e_{i\oplus \ell}}\nonumber\\ &= \sum_{\j=1}^k\ \gamma_{kj}^2\ket{e_{i\oplus j}}^{\otimes 2}+2\sum_{j<\ell}^{k} \gamma_{kj}\gamma_{k\ell} \mathcal S(\ket*{e_{i\oplus j}}\otimes \ket{e_{i\oplus \ell}}).\label{eq:2_fold_a_ik}
	\end{align}
	We now show by induction on $k$ that each symmetric tensor $\mathcal S(\ket{e_i}\otimes\ket{e_{i\oplus k}})$ lies in the span of the set in Eq.~\eqref{eq:set_spanning}. For the base case of $k=1$ and arbitrary $i$, one directly obtains
	\begin{equation}
		2\mathcal S(\ket{e_{i}}\otimes \ket{e_{i\oplus {1}}})=\ket{a_{i1}}^{\otimes 2} - \ket{e_i}^{\otimes 2}-\ket{e_{i\oplus 1}}^{\otimes 2},
	\end{equation}
	which is clearly in the span of~\eqref{eq:set_spanning}.
	Assume now that $\mathcal S(\ket{e_{i}}\otimes \ket*{e_{i\oplus j}})$ lies in the span of the set in Eq.~\eqref{eq:set_spanning} for all $1\leqslant i\leqslant m$ and $1\leqslant j\leqslant k-1$. From Eq.~\eqref{eq:2_fold_a_ik}, we can isolate the term $\mathcal S(\ket{e_i}\otimes\ket{e_{i\oplus k}})$:
	\begin{align}
		\alpha_k\beta_k\mathcal S(\ket{e_{i}}\otimes \ket{e_{i\oplus k}}) &= \frac{1}{2}\left[{\ket{a_{ik}}^{\otimes 2}}- \sum_{\j=1}^k\ \gamma_{kj}^2\ket{e_{i\oplus j}}^{\otimes 2}\right]\nonumber\\ \quad &-\sum_{{\substack{j<\ell\\(j,\ell)\neq(0,k)}}}\gamma_{kj}\gamma_{k\ell} \mathcal S(\ket*{e_{i\oplus j}}\otimes \ket*{e_{i\oplus \ell}}).
	\end{align}
	By the induction hypothesis, all terms on the right-hand side are in the span of the set~\eqref{eq:set_spanning}, and therefore so is $\mathcal S(\ket{e_i}\otimes\ket{e_{i\oplus k}})$. Hence, by induction, the set in Eq.~\eqref{eq:set_spanning} spans the entire symmetric subspace $\mathcal S(\mathbb C^m \otimes \mathbb C^m)$.
\end{proof}

We note that the total number of elements in~\eqref{eq:set_spanning} is $m(m+1)$, which exceeds the dimension $\tfrac{1}{2}m(m+1)$ of the symmetric subspace; thus, the system is not minimal, and the elements are linearly dependent.

\section{Sufficient condition for self-testing of an ROCN Bell inequality}\label{sec:sufficient_condition}

We are now ready to establish a connection between the symmetric subspace and the self-testing condition for ROCN matrices. As shown in Theorem~\ref{th:self_testing_rocn}, self-testing of a given ROCN inequality is completely determined by the rank of the associated matrix $M$. Although the criterion~\eqref{eq:necessary_self_testing} provides a necessary and sufficient condition, its formulation may appear somewhat abstract from a physical perspective. To obtain a more transparent and operational viewpoint, we now derive an equivalent characterization expressed directly in terms of the column vectors of the defining matrix $h$.

\begin{thm}\label{th:self_testing_column_vectors}
	Let $h$ be an $m\times n$ ROCN matrix with $m$ even, and let $\{\ket{h_j}\}_{j}\subset\mathbb C^m$ denote its column vectors. Then the maximal quantum value of the ROCN Bell functional $I_h$ self-tests the canonical strategy introduced in Definition~\ref{def:canonical_strategy} if and only if, for every symmetric vector $\ket{\psi} \in \mathcal S(\mathbb C^m \otimes \mathbb C^m)$ with null diagonal elements, $(\bra{e_i}\otimes\bra{e_i})\ket{\psi}=0$ for $i = 1,\dots,m$, the following condition holds:
	\begin{equation}\label{eq:necessary_self_testing_3}
		\forall\; j =1,\dots, n: \quad \left(\bra{h_j}\otimes \bra{h_j}\right) \ket{\psi}=0 \quad \Leftrightarrow \quad \ket{\psi}=0.
	\end{equation}
	In particular, if $\mathcal S(\mathbb C^m\otimes\mathbb C^m)=\mathrm{span}\{\ket{h_j}\otimes \ket{h_j}:j=1,\dots, {n}\}$, then the maximal quantum value of the ROCN Bell functional $I_h$ self-tests the canonical strategy of Definition~\ref{def:canonical_strategy}.
\end{thm}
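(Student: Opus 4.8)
\medskip

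The plan is to obtain the statement directly from the rank criterion of Theorem~\ref{th:self_testing_rocn}, by recognizing the matrix $M$ appearing there as the matrix of a natural linear map on the symmetric subspace. Write $\mathcal V_0\subset\mathcal S(\mathbb C^m\otimes\mathbb C^m)$ for the subspace of symmetric tensors with null diagonal, i.e.\ the $\ket{\psi}$ with $(\bra{e_i}\otimes\bra{e_i})\ket{\psi}=0$ for all $i$; a basis of $\mathcal V_0$ is $\{\ket{e_i}\otimes\ket{e_k}+\ket{e_k}\otimes\ket{e_i}:1\leqslant i<k\leqslant m\}$, so that $\dim\mathcal V_0=\tfrac12 m(m-1)$, which is exactly the number of columns of $M$. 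I would then consider the linear map
\[
  L:\ \mathcal V_0\longrightarrow\mathbb C^{n},\qquad \ket{\psi}\longmapsto\bigl((\bra{h_j}\otimes\bra{h_j})\ket{\psi}\bigr)_{j=1,\dots,n}.
\]

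The first step is the short computation expressing $L$ in this basis: for $\ket{\psi}=\sum_{1\leqslant i<k\leqslant m}c_{ik}\bigl(\ket{e_i}\otimes\ket{e_k}+\ket{e_k}\otimes\ket{e_i}\bigr)$ one gets $(\bra{h_j}\otimes\bra{h_j})\ket{\psi}=2\sum_{i<k}h_{ij}h_{kj}\,c_{ik}$, since $h$ is real and the diagonal part of $\ket{h_j}\otimes\ket{h_j}$ is annihilated by $\ket{\psi}\in\mathcal V_0$. Unwinding the definition of $M$, its $j$-th row ($j=1,\dots,n$) collects precisely the off-diagonal coefficients $h_{ij}h_{kj}$ ($1\leqslant i<k\leqslant m$) of $\ket{h_j}\otimes\ket{h_j}$, so, up to an overall factor $2$, the matrix of $L$ in the above basis is exactly $M$. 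Hence $L$ is injective $\iff$ $M$ has full column rank $\iff$ $\mathrm{rank}(M)=\tfrac12 m(m-1)$; and injectivity of $L$ is nothing but condition~\eqref{eq:necessary_self_testing_3} (the reverse implication there being trivial). Combining this with Theorem~\ref{th:self_testing_rocn} then gives the claimed equivalence between self-testing of the canonical strategy and~\eqref{eq:necessary_self_testing_3}.

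For the final assertion, suppose $\mathcal S(\mathbb C^m\otimes\mathbb C^m)=\mathrm{span}\{\ket{h_j}\otimes\ket{h_j}:j=1,\dots,n\}$. If $\ket{\psi}\in\mathcal V_0$ satisfies $(\bra{h_j}\otimes\bra{h_j})\ket{\psi}=0$ for all $j$, then $\ket{\psi}$ is orthogonal to every $\ket{h_j}\otimes\ket{h_j}$, hence to their span, which is the whole symmetric subspace; since $\ket{\psi}$ itself lies in that subspace, $\ket{\psi}=0$. Thus~\eqref{eq:necessary_self_testing_3} holds and the first part applies. Note that the hypotheses ``$m$ even'' and ``$h$ ROCN'' enter only through the invocation of Theorem~\ref{th:self_testing_rocn}; the translation itself is purely linear-algebraic.

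The step I expect to require the most care is the identification of $M$ with the matrix of $L$: one must match the chosen enumeration of the $\tfrac12 m(m-1)$ pairs $(i,k)$ with the basis of $\mathcal V_0$, and keep track of the factor coming from symmetrization and of the fact that the diagonal coefficients $h_{ij}^2$ of $\ket{h_j}\otimes\ket{h_j}$ play no role (precisely because $\ket{\psi}$ has null diagonal). Once this dictionary is in place, the rest is the elementary fact that a linear map is injective exactly when its matrix has full column rank, used in dimension $\dim\mathcal V_0=\tfrac12 m(m-1)$.
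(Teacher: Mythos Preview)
Your proof is correct and follows essentially the same approach as the paper: both reduce the statement to Theorem~\ref{th:self_testing_rocn} by recognizing that the full-column-rank condition on $M$ is exactly the injectivity of the map $\ket{\psi}\mapsto\bigl((\bra{h_j}\otimes\bra{h_j})\ket{\psi}\bigr)_j$ on the null-diagonal symmetric tensors, and both derive the spanning-set corollary from non-degeneracy of the inner product. Your explicit introduction of the linear map $L$ and its matrix is a slightly cleaner packaging of what the paper writes out componentwise via the symmetric matrix $O$, but the content is identical.
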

\begin{proof}
	First, observe that each column vector $\ket{h_j}$ of $h$ can be written as
	\begin{equation}
		\ket{h_j}=\sum_{i=1}^{m}h_{ij}\ket{i}\in\mathbb C^m,
	\end{equation}
	where $\{\ket{i}\}_{i=1}^m$ is the canonical basis of $\mathbb C^m$. From Eq.~\eqref{eq:hmatrix_2}, it follows that the vectors $\ket{h_j}$ are normalized.
	By Theorem~\ref{th:self_testing_rocn}, the ROCN Bell functional $I_h$ self-tests the canonical strategy if and only if the matrix $M$ defined in Eq.~\eqref{eq:M_matrix} has full column rank. Equivalently, this means that for any family of complex numbers $\{O_{ik}\}_{ij}$, with $1 \leqslant i < k \leqslant m$, the following holds:
	\begin{equation}
		\forall\; j=1,\dots,{n} :\quad \sum_{i<k}^m h_{ij} h_{kj} O_{ik} = 0 \quad \Leftrightarrow \quad \forall\;{i < k} : \quad O_{ik} = 0.
	\end{equation}
	We now rewrite this criterion in a matrix-vector form. We require that for every complex symmetric matrix $O \in \mathbb{C}^{m \times m}$ with null diagonal elements, the following holds:
	\begin{equation}\label{eq:self_testin_sufficient_1}
		\forall\; j=1,\dots,{n} :\quad\sum_{i,k=1}^m h_{ij}h_{kj}O_{ik}=0 \quad\Leftrightarrow\quad O=0.
	\end{equation}
	Let us define
	\begin{equation}
		\ket{{\psi}}=\sum_{i,k=1}^{m}O_{ik}\ket{i} \otimes \ket{k}\in\mathcal S(\mathbb C^m\otimes \mathbb C^m).
	\end{equation}
	Then, condition~\eqref{eq:self_testin_sufficient_1} can be rewritten as~\eqref{eq:necessary_self_testing_3}. The last part of the theorem is a consequence of the non-degeneracy of the scalar product.
\end{proof}

Note that the equality $\mathcal S(\mathbb C^m\otimes\mathbb C^m)=\mathrm{span}\{\ket{h_j}\otimes \ket{h_j}:j=1,\dots, {n}\}$ implies that the $n$ vectors $\ket{h_j} \otimes \ket{h_j}$ span a Hilbert space of dimension ${m(m+1)}/2$. Consequently, in this case, $n\geqslant {m(m+1)}/2>m(m-1)/2$, and we get a stronger condition than the rank condition in Eq.~\eqref{eq:necessary_self_testing}.

This symmetric-spanning condition is sufficient but not necessary. For example, the CHSH functional~\eqref{eq:chsh} is defined by the ROCN matrix
\begin{equation}
	h_{\mathrm{CHSH}}=\frac{1}{\sqrt 2}
	\begin{pmatrix}
		+1 & +1 \\ +1 & -1
	\end{pmatrix}.
\end{equation}
Its column vectors $\ket{h_1}$ and $\ket{h_2}$ are
\begin{equation}
	\ket{h_1}=\frac{1}{\sqrt{2}} \begin{pmatrix}
		+1 \\ +1
	\end{pmatrix},\quad \ket{h_2}=\frac{1}{\sqrt{2}} \begin{pmatrix} +1\\ -1
	\end{pmatrix}.
\end{equation}
Since the symmetric two-qubit subspace has dimension three, these two column vectors cannot span the symmetric subspace; hence the CHSH columns do not form a symmetric spanning set, even though CHSH does self-test the canonical qubit strategy. This example illustrates that the symmetric-spanning criterion is stronger than necessary for self-testing in general.

\section{Construction of self-testing ROCN Bell inequalities}\label{sec:self-testing_construction}

As we established in Section~\ref{sec:sufficient_condition}, a sufficient condition for an ROCN matrix $h$ to enable self-testing is that its column vectors form a symmetric spanning set. Our goal in this section is therefore to identify a suitable symmetric spanning family from which an ROCN matrix can be constructed. 

At first sight, the set in Eq.~\eqref{eq:spanning_2_fold_product} appears to be a natural choice: its two-fold tensor products span the entire symmetric subspace. However, these vectors cannot be directly used as the columns of $h$, since they fail to satisfy the {normalization} requirement~\eqref{eq:hmatrix_2} in the definition of an ROCN matrix. The family introduced in Theorem~\ref{th:set_spanning}, obtained by applying the Gram--Schmidt procedure to the vectors in Eq.~\eqref{eq:basis}, circumvents this problem. Each resulting vector is properly normalized, and the entire collection forms a symmetric spanning set. As we show below, this allows us to construct ROCN Bell inequalities that self-test any even number of Clifford generators~\eqref{eq:canonical_operator_alice}, with local dimension $m=2r$. The main result is summarized in the following theorem.

\begin{thm}\label{th:self_testing_construction}
	For every even integer $m=2r$, there exists an ROCN matrix $h$ of dimension $m\times m(m+1)$ that enables self-testing of the canonical strategy introduced in Definition~\ref{def:canonical_strategy}.
\end{thm}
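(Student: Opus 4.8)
The plan is to combine Theorem~\ref{th:set_spanning} with Theorem~\ref{th:self_testing_column_vectors}: we must exhibit, for each even $m=2r$, an $m\times m(m+1)$ real matrix $h$ whose columns are exactly the $m(m+1)$ vectors of the symmetric spanning set~\eqref{eq:set_spanning}, and then verify that $h$ is indeed an ROCN matrix. The self-testing conclusion then follows immediately from the last sentence of Theorem~\ref{th:self_testing_column_vectors}, since the columns span the entire symmetric subspace $\mathcal S(\mathbb C^m\otimes\mathbb C^m)$.

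First I would fix notation: take the orthonormal basis $\{\ket{e_i}\}_{i=1}^m$ of $\mathbb C^m$ to be the canonical basis, so that a vector in $\mathbb C^m$ is literally a column of length $m$. Define $h$ by declaring its $m(m+1)$ columns to be the $m$ vectors $\ket{e_i}$ together with the $m(m+1)/2$ vectors $\ket{a_{ik}}$ for $1\leqslant i\leqslant k\leqslant m$; wait — actually the set~\eqref{eq:set_spanning} as written has $m + \binom{m+1}{2}$ elements, which is $m + m(m+1)/2$, not $m(m+1)$. I would reconcile this by noting that in the construction~\eqref{eq:a_vectors} the index $k$ runs over $1,\dots,m$ for each $i$, giving $m^2$ vectors $\ket{a_{ik}}$; together with the $m$ basis vectors $\ket{e_i}$ this is $m^2+m=m(m+1)$ columns, matching the claimed dimension. (The restriction $1\leqslant i\leqslant k\leqslant m$ in~\eqref{eq:set_spanning} only selects a spanning subfamily; for the ROCN construction we are free to include all $m(m+1)$ of them, and a superset of a spanning set is still spanning.) Each such column is normalized by construction — the $\ket{e_i}$ trivially, and the $\ket{a_{ik}}$ because they are outputs of Gram--Schmidt — so the column-normalization condition~\eqref{eq:hmatrix_2} holds automatically.

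The substantive step is the row-orthogonality condition~\eqref{eq:hmatrix_1}, i.e. that the $m$ rows of $h$ are mutually orthogonal (equivalently, $hh^T$ is diagonal). Writing $h h^T = \sum_j \ket{h_j}\bra{h_j}$ over all $m(m+1)$ columns, I would split the sum into the contribution $\sum_i \ket{e_i}\bra{e_i} = \mathbb I_m$ from the canonical vectors and the contribution $G := \sum_{i,k}\ket{a_{ik}}\bra{a_{ik}}$ from the Gram--Schmidt vectors. Using the explicit recursion~\eqref{eq:vectors_matrix}--\eqref{alpha_beta}, for each fixed $i$ the family $\{\ket{a_{ik}}\}_{k=1}^{m}$ is an orthonormal basis of $\mathbb C^m$ (it is obtained by Gram--Schmidt from $m$ linearly independent vectors spanning $\mathbb C^m$), so $\sum_{k=1}^m \ket{a_{ik}}\bra{a_{ik}} = \mathbb I_m$ for every $i$, whence $G = m\,\mathbb I_m$ and $hh^T = (m+1)\mathbb I_m$, which is certainly diagonal. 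Thus~\eqref{eq:hmatrix_1} holds, and since the canonical basis vectors are themselves among the columns, every row of $h$ has a nonzero entry. Hence $h$ is an $m\times m(m+1)$ ROCN matrix. By Theorem~\ref{th:set_spanning} the columns form a symmetric spanning set, so Theorem~\ref{th:self_testing_column_vectors} gives self-testing of the canonical strategy.

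The main obstacle I anticipate is purely bookkeeping: pinning down the exact cardinality of the spanning family and confirming that each sub-block $\{\ket{a_{ik}}\}_{k=1}^m$ is a genuine orthonormal basis of $\mathbb C^m$ (so that the resolution-of-identity argument applies), rather than any delicate analytic estimate — once the resolution of identity $\sum_k\ket{a_{ik}}\bra{a_{ik}}=\mathbb I_m$ is in hand, the ROCN verification is immediate. A secondary point worth a sentence is that the vectors in~\eqref{eq:basis} indeed span $\mathbb C^m$: the $m$ vectors $(\ket{e_i}+\ket{e_{i\oplus k}})/\sqrt2$, $k=1,\dots,m-1$, together with $\ket{e_i}$ are easily seen to be linearly independent (subtract $\ket{e_i}$ from each of the first $m-1$ to recover all $\ket{e_{i\oplus k}}$), which is what guarantees Gram--Schmidt returns a full orthonormal basis.
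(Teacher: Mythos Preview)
Your proposal is correct and follows essentially the same route as the paper: the paper also takes all $m(m+1)$ vectors $\{\ket{e_i}\}\cup\{\ket{a_{ik}}:1\le i,k\le m\}$ as columns, organizing them as a horizontal concatenation $h=[O^{(0)}\ O^{(1)}\cdots O^{(m)}]$ of $m\times m$ orthogonal blocks (with $O^{(0)}=\mathbb I_m$ and $O^{(\ell)}=[\ket{a_{\ell 1}}\cdots\ket{a_{\ell m}}]$), and then invokes Theorems~\ref{th:set_spanning} and~\ref{th:self_testing_column_vectors} exactly as you do. Your resolution-of-identity computation $hh^T=\sum_{\ell=0}^m O^{(\ell)}(O^{(\ell)})^T=(m+1)\mathbb I_m$ is just the coordinate-free way of saying each block is orthogonal, and your handling of the cardinality discrepancy (using all $k=1,\dots,m$ rather than only $i\le k$) matches what the paper does implicitly.
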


\begin{proof}
	The matrix $h$ is constructed starting from a family of orthogonal matrices
	\begin{equation}
		\{O^{(0)},O^{(1)},\dots,O^{(m)}\},
	\end{equation}
	each of dimension $m\times m$, by concatenation:
	\begin{equation}\label{eq:h_matrix_example}
		h=[O^{(0)}\  O^{(1)} \dots\  O^{(m)}].
	\end{equation}
    Explicitly, let $1\leqslant i\leqslant m$ and $1\leqslant j \leqslant m(m+1)$. Then, suppose $\ell m+1 \leqslant j \leqslant (\ell+1)m$ for some $\ell=0,\dots, m$, and define
	\begin{equation}
		h_{ij}=O_{i,j-\ell m}^{(\ell)}.
	\end{equation} 
    By construction, all columns of $h$ have unit norm and it rows are normalized and mutually orthogonal. Hence $h$ satisfies both ROCN conditions and it is semi-orthogonal. 
    
    We now specify the blocks $O^{(\ell)}$. Let $\{\ket{e_i}\}_{i=1}^m$ be the canonical real orthonormal basis of $\mathbb C^m$. We set the columns of $O^{(0)}$ as these basis vectors:
	\begin{equation}
		O^{(0)}=[\ket{e_1}\ \ket{e_2}\  \dots \ket{e_m}].
	\end{equation}
	Next, recall the vectors $\ket{a_{ik}}$ introduced in Eq.~\eqref{eq:a_vectors}. For each fixed $i$, the set $\{\ket{a_{ik}}:k=1,\dots,m\}$ forms an orthonormal basis of $\mathbb C^m$.  Therefore, for each $\ell = 1, \ldots, m$, we define
	\begin{equation}
		O^{(\ell)}=[\ket*{a_{\ell 1}}\ket*{a_{\ell 2}}\dots \ket*{a_{\ell m}}].
	\end{equation}
	Each $O^{(\ell)}$ is {orthogonal} by construction, and thus the concatenated matrix $h$ given in Eq.~\eqref{eq:h_matrix_example} satisfies both ROCN conditions. Finally, by Theorem~\ref{th:set_spanning}, the columns of $h$ form a symmetric spanning set. By Theorem~\ref{th:self_testing_column_vectors}, any ROCN matrix whose columns form a symmetric spanning set ensures self-testing of the canonical strategy. This completes the proof.
\end{proof}


\section{Examples}\label{sec:examples}

In this section, we provide two examples of self-testing ROCN matrices using the construction outlined in Theorem~\ref{th:self_testing_construction}. We consider the cases $m=2$ and $m=4$.

\begin{exmp}[$m=2$]
	Consider the case of two anticommuting observables. The three orthogonal matrices introduced in the proof of Theorem~\ref{th:self_testing_construction} are
	\begin{equation}
		O^{(0)}=\begin{pmatrix}
			+1 & 0 \\ 0 & +1
		\end{pmatrix},\quad O^{(1)}=\frac{1}{\sqrt 2}\begin{pmatrix}
			+1 & +1 \\ +1 & - 1
		\end{pmatrix},\quad O^{(2)} = \frac{1}{\sqrt 2}\begin{pmatrix}
			+1 & -1 \\ +1 & +1
		\end{pmatrix},
	\end{equation}
	with respect to the canonical basis of $\mathbb C^2$. In this case, Alice will have access to two observables $\{A_1,A_2\}$, while Bob will have access to six observables $\{B_1,B_2,\dots,B_6\}$. All together, the Bell functional is in the form
    \begin{align}
        I_h&=\frac{A}{\sqrt 2} A_1 \left[ \sqrt 2 B_1 + B_3 +B_4 - B_5 - B_6 \right]\nonumber\\
        &+ \frac{1}{\sqrt 2}A_2\left[\sqrt{2}B_2+B_3- B_4 + B_5 + B_6\right].
    \end{align}
    Observe that the matrix $O^{(1)}$ already defines the CHSH Bell inequality. 
    Moreover, the second and third matrices share the same column vectors and therefore do not contribute to self-testing or to spanning the symmetric subspace. This explicitly shows that the construction provided is not optimal. Consequently, only $O^{(0)}$ and $O^{(1)}$ are necessary to span the symmetric subspace. In this sense, the construction given by Theorem~\ref{th:set_spanning} is overdetermined.
\end{exmp}

\begin{exmp}[$m=4$]
	We now apply the same construction as provided in the proof of Theorem~\ref{th:self_testing_construction} for the case of $m=4$. Working in the canonical basis, the orthogonal matrices are
	\begin{align}
		O^{(0)}&=\begin{pmatrix}
			1 & 0 &0 &0 \\ 0 & 1 & 0 & 0 \\ 0& 0 & 1 & 0 \\ 0& 0 & 0& 1
		\end{pmatrix},\quad O^{(1)}=\begin{pmatrix}
			1/\sqrt{2} & 1/\sqrt{6} & 1/2\sqrt 3 & 1/2 \\ 1/\sqrt{2} & -1/\sqrt{6} & -1/2\sqrt 3  & -1/2 \\ 0& \sqrt{2/3} & -1/2\sqrt 3  & -1/2 \\ 0& 0 & \sqrt{3/2}& -1/2
		\end{pmatrix},\nonumber\\
		O^{(2)}&=\begin{pmatrix}
			0& 0 & \sqrt{3/2}& -1/2\\
			1/\sqrt{2} & 1/\sqrt{6} & 1/2\sqrt 3 & 1/2 \\ 1/\sqrt{2} & -1/\sqrt{6} & -1/2\sqrt 3  & -1/2 \\ 0& \sqrt{2/3} & -1/2\sqrt 3  & -1/2
		\end{pmatrix},\nonumber\\
		O^{(3)}&=\begin{pmatrix}
			0& \sqrt{2/3} & -1/2\sqrt 3  & -1/2\\
			0& 0 & \sqrt{3/2}& -1/2\\
			1/\sqrt{2} & 1/\sqrt{6} & 1/2\sqrt 3 & 1/2 \\ 1/\sqrt{2} & -1/\sqrt{6} & -1/2\sqrt 3  & -1/2
		\end{pmatrix},\nonumber\\
		O^{(4)}&=\begin{pmatrix}
			1/\sqrt{2} & -1/\sqrt{6} & -1/2\sqrt 3  & -1/2 \\
			0& \sqrt{2/3} & -1/2\sqrt 3  & -1/2\\
			0& 0 & \sqrt{3/2}& -1/2\\
			1/\sqrt{2} & 1/\sqrt{6} & 1/2\sqrt 3 & 1/2
		\end{pmatrix}.
	\end{align}
    In particular, the $j$-column of the matrix $O^{(i)}$ is given by the vectors $\ket{a_{ij}}$ of Eq.~\eqref{eq:vectors_matrix}, with $\alpha_k$ and $\beta_k$ as in Eq.~\eqref{alpha_beta}. The orthogonality condition of the matrices $O^{(i)}$ ensures ROCN conditions~\eqref{eq:hmatrix_1} and~\eqref{eq:hmatrix_2} are satisfied. Alice has therefore $4$ anticommuting observables $\{A_1,\dots,A_4\}$, while Bob has $20$ observables $\{B_1,\dots,B_{20}\}$. The Bell functional is therefore in the form
    \begin{equation}
        I_h=\sum_{\ell=0}^4 \sum_{ij=1}^4 A_{i} O_{ij}^{(\ell)} B_{i+j}.
    \end{equation}
	As in the $m=2$ case, the problem is overdetermined, and the first matrix $O^{(0)}$ can be omitted, as its columns do not contribute to condition~\eqref{eq:necessary_self_testing_3}, according to Theorem~\ref{th:self_testing_column_vectors}.In particular, this appears to hold across all dimensions when the canonical basis is chosen. 
\end{exmp}

\section{Conclusions}
Self-testing constitutes one of the strongest forms of certification of quantum resources. It is a fully device-independent technique: no assumptions are made about the internal functioning of the devices, and yet, in certain scenarios, the maximal violation of a Bell inequality uniquely determines the underlying quantum state and measurements. The class of ROCN Bell inequalities, introduced in Ref.~\cite{Michalski2025}, provides a broad family of such examples. These inequalities generalize the CHSH and Elegant Bell inequalities, admit an analytic characterization of the quantum bound, and come equipped with a necessary and sufficient condition for self-testing.

In this paper, we explored explicit constructions of ROCN Bell inequalities and identified a sufficient condition for self-testing of the canonical strategy. In particular, we showed that it is sufficient for the column vectors to form a symmetric spanning set, meaning that their two-fold tensor products span the entire symmetric subspace. Building on this criterion, we presented an explicit family of ROCN Bell inequalities that self-test the canonical strategy associated with any even number of Clifford generators. While this construction is general and conceptually simple, it is not optimal: it uses significantly more columns than the minimal number required for self-testing. This observation points to an interesting direction for future research---namely, the design of more efficient constructions that achieve self-testing with a minimal number of columns while preserving the analytic structure of ROCN inequalities.

\section{Acknowledgments}

We thank Remigiusz Augusiak and Wojciech Bruzda for valuable discussions. This work is supported by the National Science Centre (Poland) through the SONATA BIS project No.\ 019/34/E/ST2/00369, and funding from the European Union's Horizon Europe research and innovation programme under grant agreement No.\ 101080086 NeQST.

\endpaper
\end{document}